\numberwithin{equation}{section}
\newtheorem{theorem}{Theorem}[section]
\newtheorem{corollary}[theorem]{Corollary}
\newtheorem{proposition}[theorem]{Proposition}
\theoremstyle{definition}
\newtheorem{definition}[theorem]{Definition}
\newtheorem{remark}[theorem]{Remark}
\newtheorem{example}[theorem]{Example}
\newcommand{\Id}{\mathbbmss{1}}
\newcommand{\rmi}{ \textnormal{i}}
\DeclareMathOperator{\Span}{Span}
\font\black=cmbx10 \font\sblack=cmbx7 \font\ssblack=cmbx5 \font\blackital=cmmib10  \skewchar\blackital='177
\font\sblackital=cmmib7 \skewchar\sblackital='177 \font\ssblackital=cmmib5 \skewchar\ssblackital='177
\font\sanss=cmss10 \font\ssanss=cmss8 %scaled 900
\font\sssanss=cmss8 scaled 600 \font\blackboard=msbm10 \font\sblackboard=msbm7 \font\ssblackboard=msbm5
\font\caligr=eusm10 \font\scaligr=eusm7 \font\sscaligr=eusm5  \font\fraktur=eufm10
\font\sfraktur=eufm7 \font\ssfraktur=eufm5 
\font\bsymb=cmsy10 scaled\magstep2
\def\all#1{\setbox0=\hbox{\lower1.5pt\hbox{\bsymb
       \char"38}}\setbox1=\hbox{$_{#1}$} \box0\lower2pt\box1\;}
\def\exi#1{\setbox0=\hbox{\lower1.5pt\hbox{\bsymb \char"39}}
       \setbox1=\hbox{$_{#1}$} \box0\lower2pt\box1\;}
\def\tx#1{{\fam0\relax#1}}
\def\hpb#1{\setbox0=\hbox{${#1}$}
    \copy0 \kern-\wd0 \kern.2pt \box0}
\def\vpb#1{\setbox0=\hbox{${#1}$}
    \copy0 \kern-\wd0 \raise.08pt \box0}
\def\pmb#1{\setbox0\hbox{${#1}$} \copy0 \kern-\wd0 \kern.2pt \box0}
\def\pmbb#1{\setbox0\hbox{${#1}$} \copy0 \kern-\wd0
      \kern.2pt \copy0 \kern-\wd0 \kern.2pt \box0}
\def\pmbbb#1{\setbox0\hbox{${#1}$} \copy0 \kern-\wd0
      \kern.2pt \copy0 \kern-\wd0 \kern.2pt
    \copy0 \kern-\wd0 \kern.2pt \box0}
\def\pmxb#1{\setbox0\hbox{${#1}$} \copy0 \kern-\wd0
      \kern.2pt \copy0 \kern-\wd0 \kern.2pt
      \copy0 \kern-\wd0 \kern.2pt \copy0 \kern-\wd0 \kern.2pt \box0}
\def\pmxbb#1{\setbox0\hbox{${#1}$} \copy0 \kern-\wd0 \kern.2pt
      \copy0 \kern-\wd0 \kern.2pt
      \copy0 \kern-\wd0 \kern.2pt \copy0 \kern-\wd0 \kern.2pt
      \copy0 \kern-\wd0 \kern.2pt \box0}
\mathchardef\za="710B  %\alpha
\mathchardef\zb="710C  %\beta
\mathchardef\zg="710D  %\gamma
\mathchardef\zd="710E  %\delta
\mathchardef\zve="710F %\epsilon
\mathchardef\zz="7110  %\zeta
\mathchardef\zh="7111  %\eta
\mathchardef\zvy="7112 %\theta
\mathchardef\zi="7113  %\iota
\mathchardef\zk="7114  %\kappa
\mathchardef\zl="7115  %\lambda
\mathchardef\zm="7116  %\mu
\mathchardef\zn="7117  %\nu
\mathchardef\zx="7118  %\xi
\mathchardef\zp="7119  %\pi
\mathchardef\zr="711A  %\rho
\mathchardef\zs="711B  %\sigma
\mathchardef\zt="711C  %\tau
\mathchardef\zu="711D  %\upsilon
\mathchardef\zvf="711E %\phi
\mathchardef\zq="711F  %\chi
\mathchardef\zc="7120  %\psi
\mathchardef\zw="7121  %\omega
\mathchardef\ze="7122  %\varepsilon
\mathchardef\zy="7123  %\vartheta
\mathchardef\zf="7124  %\varomega
\mathchardef\zvr="7125 %\varrho
\mathchardef\zvs="7126 %\varsigma
\mathchardef\zf="7127  %\varphi
\mathchardef\zG="7000  %\Gamma
\mathchardef\zD="7001  %\Delta
\mathchardef\zY="7002  %\Theta
\mathchardef\zL="7003  %\Lambda
\mathchardef\zX="7004  %\Xi
\mathchardef\zP="7005  %\Pi
\mathchardef\zS="7006  %\Sigma
\mathchardef\zU="7007  %\Upsilon
\mathchardef\zF="7008  %\Phi
\mathchardef\zW="700A  %\Omega
\mathchardef\zC="7009  %\Psi
\newcommand{\be}{\begin{equation}}
\newcommand{\ee}{\end{equation}}
\newcommand{\bea}{\begin{eqnarray}}
\newcommand{\eea}{\end{eqnarray}}
\def\*{{\textstyle *}}
\newcommand{\R}{{\mathbb R}}
\newcommand{\C}{{\mathbb C}}
\newcommand{\s}{{\textstyle *}}
\def\xi{\tx{i}}
\def\s*{{\scriptstyle *}}
\newcommand{\beas}{\begin{eqnarray*}}
\newcommand{\eeas}{\end{eqnarray*}}
\author{Andrew James Bruce}
   \address{Department of Mathematics,
The Computational Foundry,
Swansea University Bay Campus,
Fabian Way,
Swansea, SA1 8EN, United Kingdon}
   \email{andrewjamesbruce@googlemail.com}
\begin{document}
\date{\today}
\title{Semiheaps and Ternary Algebras in Quantum Mechanics Revisited}

\begin{abstract}
	We re-examine the appearance of semiheaps and (para-associative) ternary algebras in quantum mechanics. In particular, we review the construction of a semiheap on a Hilbert space and the set of bounded operators on a Hilbert space. The new aspect of this work is a discussion of how symmetries of a quantum system induce homomorphisms of the relevant semiheaps and ternary algebras.  \\
	\smallskip

	\noindent \textbf{Keywords:} semiheaps; ternary algebras; para-associativity; quantum mechanics. \\ 
	\noindent \textbf{MSC 2020:} 17A30;  17A40; 20N10; 81R99.

\end{abstract}
\maketitle
\vspace{-10pt}
%%%%%%%%%%%%%%%%%%%%%%%%%%%%%%%%%%%%%%%%%%%%%%%%%%%%%%%%%%%%%%
\section{Introduction}
Heaps were introduced by Pr\"{u}fer \cite{Prufer:1924} and  Baer \cite{Baer:1929} as a set equipped with a ternary operation satisfying simple axioms. One can think of a heap as a group in which the identity element has been forgotten. Indeed, these axioms are satisfied in a group  if we define the ternary operation as $(a,b,c) \mapsto ab^{-1}c$.  For example, given a vector space or more generally an affine space, we can construct a heap operation as $(u,v,w) \mapsto u - v+w$.  Conversely, by selecting any element in a heap, one can reduce the ternary operation to a  group operation, such that the chosen element is the identity element. \par 
There is a slightly weaker notion of a \emph{semiheap}. A semiheap is a non-empty set $H$, equipped with a ternary operation $[a,b,c] \in H$ that satisfies the \emph{para-associative law} 
$$\big[ [a,b,c] , d,e \big] = \big [ a,[d,c,b],e\big] =  \big[ a,b,[c , d,e] \big]\, ,$$
for all $a,b,c,d$ and $e \in H$. A semiheap is a \emph{heap} when all its elements are \emph{biunitary}, i.e., $[a,b,b] =a$ and  $[b,b,a] = a$, for all $a$ and $b \in H$. This condition is also referred to as the Mal’cev identities. A \emph{homomorphism of semiheaps} $\phi : (H, [-,-,-]) \rightarrow (H',[-,-,-]')$ is a map $\phi : H \rightarrow H'$ such that $\phi\big([a,b,c]\big) = [\phi(a), \phi(b), \phi(c)]'$.  For more details about heaps and related structures the reader my consult Hollings \& Lawson \cite{Hollings:2017} and/or Brzeziński \cite{Brzezinski:2020}. \par 
In this paper, we re-examine the natural occurrences of semiheaps in the formalism of standard non-relativistic quantum mechanics. The semiheaps explored here have appeared scattered in the mathematics literature under different names. However, there seems to be almost nothing written with physicists and quantum mechanics in mind. The exception here is  Kerner (see \cite{Kerner:2008}), who refers to ``2nd type associativity'' or ``B-associativity'', this is precisely the above para-associativity law. \par 
In the setting of quantum mechanics, we do not just have a semiheap but also a vector space structure. A $\mathbb{C}$-vector space with a ternary product that is linear in the first and third arguments, and conjugate linear in the second argument, we will refer to as a \emph{ternary algebra} (see \cite{Abramov:2009,Bazunova:2004,Kerner:2018,Michor:1996}).  If, in addition, the ternary product is para-associative, so defines a semiheap on the underlying set, then we speak of a  \emph{para-associative ternary algebra}. We will only deal with the para-associative case past this point. A homomorphism of para-associative ternary algebras is a linear map that is simultaneously a homomorphism of semiheaps.\par 
We review the construction of semiheaps and ternary algebras on a Hilbert space and on the $*$-algebra of bounded operators on the said Hilbert space. While these constructions are not new, they are not well-known within the context of quantum mechanics. The new aspect of this work is a discussion of symmetries of quantum systems and how they induce semiheap and, in turn, ternary algebra homomorphisms. Generalised derivations of the ternary algebras are also discussed.  We will focus on algebraic aspects of the theory and not address topological issues. \par 
Rather generally, ternary operations and relations have a long history in physics. As examples, we have Nambu brackets (1973; \cite{Nambu:1973}), the Yang-Baxter equation  (1967, 1972; \cite{Baxter:1972,Yang:1967}) and the BLG model of M2-branes (2007, 2009; \cite{Bagger:2007,Gustavsson:2009}). A review of n-ary generalisations of Lie algebras and their physical applications can be found in \cite{Azcarraga:2010}. We also mention that $L_\infty$-algebras (c.f. \cite{Lada:1993}) have found a wealth of applications in physics, notably through the BV-formalism of gauge theories.  \par 
It is also curious to note that, within the standard model, the number three constantly appears. Specifically, there are three generations of quarks, three generations of leptons,  three fundamental forces (gravity is not included and is different), and three quarks are needed to make a baryon. Alongside this, there are three spatial dimensions and three fundamental inversions - charge (C), parity (P) and time (T). It is only the combination of CPT that is respected in all interactions. It is not known how, or indeed if, these threes are related. 
\section{Semiheaps associated with Hilbert spaces}
\subsection{The semiheap  and ternary algebra of a Hilbert space}
Given a vector space, there is no obvious way to multiply two vectors together and obtain another vector in the same space. However, if the vector space comes equipped with an inner product, then we can multiply three vectors together in a canonical way to obtain another vector. For the case at hand, we will restrict attention to (complex) Hilbert spaces as found in quantum mechanics. We will employ Dirac's notation throughout this paper. We will denote by $\mathcal{H}$ both a Hilbert space and its underlying set, the context should be clear. To emphasise the linear structure we will write $(\mathcal{H}, +)$.
\begin{definition}\label{Def:TerProdVec}
	Let $\mathcal{H}$ be a Hilbert space, the \emph{vector ternary product} $[-,-,-] : \mathcal{H}\times \mathcal{H} \times \mathcal{H} \longrightarrow \mathcal{H}$ is defined as
	$$[| \psi_1 \rangle  , | \psi_2 \rangle, | \psi_3 \rangle] := | \psi_1 \rangle \langle \psi_2 | \psi_3 \rangle\,.$$
\end{definition}
Recall that the norm of a vector is defined as $|| \, |\psi \rangle \, || :=\sqrt{ \langle \psi | \psi \rangle}$. It is then immediately clear that  $|| \,[| \psi_1 \rangle  , | \psi_2 \rangle, | \psi_3 \rangle]\, || = |\langle \psi_2 | \psi_3 \rangle | \, || ~ |\psi_1 \rangle \, ||  \leq || ~ |\psi_3 \rangle \, || \, || ~ |\psi_2 \rangle \, ||\, || ~ |\psi_1 \rangle \, ||  $ via the Cauchy--Schwarz inequality. \par  
The following proposition is evident.
\begin{proposition}\label{Prop:LinTerProdVec}
	Let $\mathcal{H}$ be a Hilbert space. Then the vector ternary product, see Definition \ref{Def:TerProdVec}, is linear with respect to the first and third arguments, and conjugate linear with respect to the second entry, i.e.,
	\begin{align*}
	[| \psi_1 \rangle + c_1 \, |\psi'_1 \rangle, |\psi_2 \rangle , |\psi_3 \rangle ] &= [| \psi_1 \rangle, |\psi_2 \rangle , |\psi_3 \rangle ]+  c_1 \,[ |\psi'_1 \rangle, |\psi_2 \rangle , |\psi_3 \rangle ]\, ,\\
	[| \psi_1 \rangle , |\psi_2 \rangle+ c_2 \, |\psi'_2 \rangle , |\psi_3 \rangle ] &= [| \psi_1 \rangle, |\psi_2 \rangle , |\psi_3 \rangle ]+  c_2^* \,[ |\psi_1 \rangle, |\psi'_2 \rangle , |\psi_3 \rangle ]\, ,\\
	[| \psi_1 \rangle ,  |\psi_2 \rangle , |\psi_3 \rangle + c_3 \, |\psi'_3 \rangle ] &= [| \psi_1 \rangle, |\psi_2 \rangle , |\psi_3 \rangle ]+  c_3 \,[ |\psi_1 \rangle, |\psi_2 \rangle , |\psi'_3 \rangle ]\, ,
	\end{align*}
	for all $|\psi_1 \rangle,|\psi_2 \rangle,|\psi_3 \rangle \in \mathcal{H}$ and $c_1, c_2, c_3 \in \C$.
\end{proposition}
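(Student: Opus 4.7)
The plan is to unpack the definition $[|\psi_1\rangle,|\psi_2\rangle,|\psi_3\rangle] = |\psi_1\rangle \langle \psi_2 | \psi_3 \rangle$ and reduce each of the three identities to a known property of the Hilbert space structure on $\mathcal{H}$, namely: (i) the vector space axioms giving linearity of scalar multiplication in $\mathcal{H}$, and (ii) the sesquilinearity of the inner product $\langle - | - \rangle$, which in the physicists' convention is conjugate-linear in the bra and linear in the ket.

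First I would tackle the first identity. Since the right-hand factor $\langle \psi_2 | \psi_3\rangle \in \mathbb{C}$ does not involve $|\psi_1\rangle$, the identity reduces to the distributivity of scalar multiplication over vector addition in $\mathcal{H}$: one writes $(|\psi_1\rangle + c_1 |\psi'_1\rangle)\langle\psi_2|\psi_3\rangle = |\psi_1\rangle\langle\psi_2|\psi_3\rangle + c_1 |\psi'_1\rangle\langle\psi_2|\psi_3\rangle$, and re-identifies the summands using the definition of the ternary product.

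Next I would turn to the third identity, which is the analogous statement for the ket argument of the inner product. Here linearity of $\langle\psi_2|-\rangle$ in its second slot gives $\langle\psi_2|\psi_3 + c_3 \psi'_3\rangle = \langle\psi_2|\psi_3\rangle + c_3\langle\psi_2|\psi'_3\rangle$, and multiplying through by $|\psi_1\rangle$ (together with distributivity of scalar multiplication over scalar addition in $\mathcal{H}$) yields the claim.

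Finally, the middle identity is the one carrying the non-trivial conjugation. The key input is that the bra $\langle\psi_2 + c_2 \psi'_2|$ equals $\langle \psi_2| + c_2^* \langle\psi'_2|$, i.e., the inner product is conjugate-linear in its first argument. Substituting this into the definition and applying distributivity gives the asserted identity with $c_2^*$ in place of $c_2$. There is no real obstacle here; the only point requiring even minimal care is to make sure the complex-conjugate appears on the correct side, which is forced by the convention that $\langle-|-\rangle$ is conjugate-linear in the bra slot. The entire proof is therefore a routine verification of three elementary identities.
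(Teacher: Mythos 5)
Your proof is correct and is exactly the routine verification the paper has in mind: the paper simply declares the proposition ``evident'' and gives no argument, so your unpacking of the definition into distributivity of scalar multiplication plus sesquilinearity of the inner product (conjugate-linear in the bra, linear in the ket) is the intended, and essentially only, route.
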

Thus, the linear structure and the vector ternary product are compatible in the above sense. Moving on to the generalised notion of associativity we have the following theorem.
\begin{theorem}\label{Trm:ParAssTerProdVec}
	The vector ternary product on a Hilbert space $\mathcal{H}$, see Definition \ref{Def:TerProdVec}, satisfies the para-associative law
	$$\big [[| \psi_1 \rangle  , | \psi_2 \rangle, | \psi_3 \rangle], | \psi_4 \rangle  , | \psi_5 \rangle \big ] = \big [ | \psi_1 \rangle  , [| \psi_4 \rangle, | \psi_3 \rangle, | \psi_2 \rangle ] , | \psi_5 \rangle \big]= \big [| \psi_1 \rangle  , | \psi_2 \rangle,[ | \psi_3 \rangle, | \psi_4 \rangle  , | \psi_5 \rangle] \big ] \, ,$$ 
	for all $|\psi_1 \rangle,|\psi_2 \rangle,|\psi_3 \rangle \in \mathcal{H}$. In other words, $\big(\mathcal{H}, [-,-,-] \big )$ is a semiheap.
\end{theorem}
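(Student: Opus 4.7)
The plan is to verify the para-associative law by direct computation, reducing each of the three expressions to the common scalar multiple
\[
|\psi_1\rangle \, \langle \psi_2 | \psi_3\rangle \, \langle \psi_4 | \psi_5\rangle
\]
of $|\psi_1\rangle$. Since the ternary product is defined by $[|\phi\rangle,|\chi\rangle,|\eta\rangle]=|\phi\rangle\langle\chi|\eta\rangle$, each nested expression is obtained simply by substituting one vector ternary product into the appropriate slot of another.

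First I would compute the outer expression $\bigl[[|\psi_1\rangle,|\psi_2\rangle,|\psi_3\rangle],|\psi_4\rangle,|\psi_5\rangle\bigr]$. Using linearity in the first slot (Proposition \ref{Prop:LinTerProdVec}), the scalar $\langle\psi_2|\psi_3\rangle$ factors out of the outer bracket, leaving $|\psi_1\rangle\langle\psi_2|\psi_3\rangle\langle\psi_4|\psi_5\rangle$. Next I would compute the rightmost expression $\bigl[|\psi_1\rangle,|\psi_2\rangle,[|\psi_3\rangle,|\psi_4\rangle,|\psi_5\rangle]\bigr]$; here linearity in the third slot pulls out the scalar $\langle\psi_4|\psi_5\rangle$, producing the same common form. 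This establishes the outer equality between the first and third terms with essentially no calculation beyond unpacking definitions and using Proposition \ref{Prop:LinTerProdVec}.

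The middle expression $\bigl[|\psi_1\rangle,[|\psi_4\rangle,|\psi_3\rangle,|\psi_2\rangle],|\psi_5\rangle\bigr]$ is where the one delicate point arises, and will be the core of the argument. The inner bracket equals $|\psi_4\rangle\langle\psi_3|\psi_2\rangle$, so I must plug a scalar multiple of $|\psi_4\rangle$ into the \emph{second} slot of the outer ternary product. By the conjugate linearity in the second argument (again Proposition \ref{Prop:LinTerProdVec}), the scalar comes out with a complex conjugation, yielding $|\psi_1\rangle\,\langle\psi_3|\psi_2\rangle^{*}\,\langle\psi_4|\psi_5\rangle$. The final step invokes the Hermitian symmetry of the inner product, $\langle\psi_3|\psi_2\rangle^{*}=\langle\psi_2|\psi_3\rangle$, to bring this expression into the same common form.

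The one place where care is required, and therefore the only obstacle worth flagging, is the interplay between the \emph{reversal} of $|\psi_2\rangle$ and $|\psi_3\rangle$ in the middle bracket of the para-associative identity and the \emph{conjugate} linearity in the second slot of the ternary product: the reversal is precisely what is needed to cancel the complex conjugation introduced when pulling the scalar out of the middle slot. Once this is observed, the whole theorem reduces to three one-line calculations, and it follows immediately that $(\mathcal{H},[-,-,-])$ is a semiheap.
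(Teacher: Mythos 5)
Your proof is correct and follows essentially the same route as the paper: a direct computation showing all three nested expressions reduce to $|\psi_1\rangle\langle\psi_2|\psi_3\rangle\langle\psi_4|\psi_5\rangle$. Your explicit remark that the reversal of $|\psi_2\rangle$ and $|\psi_3\rangle$ in the middle bracket exactly compensates the conjugation from the second slot is a point the paper's proof passes over silently, but it is the same calculation.
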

\begin{proof}
	This follows via direct computation. \\
	\begin{enumerate}
		\setlength\itemsep{1em}
		\item $\big [[| \psi_1 \rangle  , | \psi_2 \rangle, | \psi_3 \rangle], | \psi_4 \rangle  , | \psi_5 \rangle \big ] =  [|\psi_1\rangle \langle \psi_2 |\psi_3\rangle, |\psi_4\rangle , |\psi_5\rangle] = | \psi_1\rangle \langle \psi_2 | \psi_3 \rangle \langle \psi_4 |\psi_5 \rangle$. \label{ProTmA}
		\item $\big [ | \psi_1 \rangle  , [| \psi_4 \rangle, | \psi_3 \rangle, | \psi_2 \rangle ] , | \psi_5 \rangle \big] = [|\psi_1\rangle , |\psi_4 \rangle \langle \psi_3 |\psi_2\rangle, |\psi_5 \rangle] =  | \psi_1\rangle \langle \psi_2 | \psi_3 \rangle \langle \psi_4 |\psi_5 \rangle$.\label{ProTmB}
		\item $\big [| \psi_1 \rangle  , | \psi_2 \rangle,[ | \psi_3 \rangle, | \psi_4 \rangle  , | \psi_5 \rangle] \big ] = [|\psi_1 \rangle , |\psi_2 \rangle , |\psi_3 \rangle \langle \psi_4 |\psi_5 \rangle] =| \psi_1\rangle \langle \psi_2 | \psi_3 \rangle \langle \psi_4 |\psi_5 \rangle$. \label{ProTmC}
	\end{enumerate}
	\smallskip
	Clearly, \eqref{ProTmA} = \eqref{ProTmB} = \eqref{ProTmC}.
\end{proof}
Note that we do not have a heap. Specifically, the Mal’cev identities 
$$[a, b, b]=a, \qquad \textnormal{and} \qquad [b,b ,a] =a\,,$$
are not, in general, satisfied.  Explicitly, we see that
$$[|\psi_1 \rangle , |\psi \rangle , |\psi\rangle] = | \psi_1 \rangle \langle \psi | \psi \rangle\,.$$
Thus, if $|\psi\rangle$ is normalised, i.e., $\langle \psi | \psi \rangle = 1 $, then $|\psi\rangle$ is \emph{right unitary}. That is  
$$[|\psi_1 \rangle , |\psi \rangle , |\psi\rangle] = |\psi_1 \rangle\,.$$
Again, assuming that $|\psi\rangle$ is normalised, $\mathrm{P}_{|\psi\rangle} := |\psi\rangle \langle \psi|$ projects an arbitrary vector onto $|\psi \rangle$. Thus,
$$[|\psi \rangle , |\psi\rangle , | \psi_3 \rangle] =  \mathrm{P}_{|\psi\rangle} \big(|\psi_3\rangle \big)\,.$$ 
It is clear from the definition of the vector ternary product that
$$[\mathbf{0}, |\psi_2 \rangle ,|\psi_3\rangle ] = [ |\psi_1 \rangle ,\mathbf{0},|\psi_3\rangle ] = [ |\psi_1 \rangle , |\psi_2 \rangle,\mathbf{0} ] =  \mathbf{0}\,,$$
where $\mathbf{0}\in  \mathcal{H}$ is the zero vector.\par 
From Proposition \ref{Prop:LinTerProdVec}, Theorem \ref{Trm:ParAssTerProdVec} and the above discussion we see that a Hilbert space naturally comes with the structure of a ternary algebra in which the ternary product defines a semiheap (see \cite{Bazunova:2004} for further generalities on ternary algebras). Note that we have conjugate linearity in the second argument of the product rather than linearity. 
\begin{definition}
	Let $\mathcal{H}$ be a Hilbert space. Then the ternary algebra $(\mathcal{H}, +, [-,-,-])$ defined via Proposition \ref{Prop:LinTerProdVec} and Theorem \ref{Trm:ParAssTerProdVec} is referred to as the \emph{vector ternary algebra}.
\end{definition}
\begin{example}\label{Exp:ComplexLine}
	Consider the complex line $\C$ and define the inner product as $\langle z_1 , z_2 \rangle = \bar{z}_1 z_2$ for arbitrary complex numbers $z_1$ and $z_2$. Then the vector ternary product  is given by 
	$$[z_1,z_2, z_2] = z_1 \bar{z}_2 z_3\,.$$
	Thus, the complex line is a ternary algebra over itself. 
\end{example}
\begin{example}\label{Exp:SpinSpace}
	The Hilbert space we consider is finite-dimensional and given by the span of two orthonormal vectors ``spin up'' and ``spin down''
	$$\mathcal{H} = \Span_\C \big \{|\uparrow \rangle, ~ |\downarrow \rangle\big \} \cong \C^2\,.$$
	The non-zero vector ternary product of the basis elements are
	\begin{align*}
	& [|\uparrow \rangle,|\uparrow \rangle,|\uparrow \rangle] = |\uparrow \rangle\,,&&
	[|\uparrow \rangle, |\downarrow \rangle, |\downarrow \rangle] = |\uparrow \rangle\,,\\
	& [|\downarrow \rangle,|\downarrow \rangle,|\downarrow \rangle] = |\downarrow \rangle\, , && [|\downarrow \rangle,|\uparrow \rangle, |\uparrow \rangle ] =  |\downarrow \rangle \,.
	\end{align*}
	All other vector ternary products are equal to the zero vector $\mathbf{0}\in \mathcal{H}$. Using the linearity and conjugate linearity one can deduce the vector ternary product for arbitrary vectors (not necessarily normalised). For example
	$$[a |\uparrow \rangle,b|\uparrow \rangle, c|\uparrow \rangle + d|\downarrow \rangle]  = a \bar{b}c|\uparrow \rangle \,, $$
	with $a,b,c$ and $d \in \C$. 
\end{example}
\begin{example}
	The orthonormal basis of states for the one-dimensional harmonic oscillator is countably infinite as each basis vector is labelled by $n \in \mathbb{N}$ (including zero). The vector ternary product can be written in this natural basis (and then using linearity and conjugate linearity to deduce the product of arbitrary vectors) as 
	$$[| n_1 \rangle , |n_2 \rangle, | n_3 \rangle] = | n_1\rangle \, \delta_{n_2 n_3}\,.$$
\end{example}
\begin{remark}
	All quantum systems with a finite or countably infinite number of states, e.g., the hydrogen atom, have a vector ternary product that can easily be expressed in a similar way to the previous example.
\end{remark}
Recall that a linear map  $\varphi : \mathcal{H} \rightarrow \mathcal{H}'$ between Hilbert spaces is said to be \emph{bounded} if there exists some $r > 0$ such that $|| \, \varphi|\psi \rangle   \, ||' =  r \,|| \, |\psi \rangle   \, || $. It is a well-known result that boundedness implies continuity of a linear map and vice versa.
A \emph{bounded linear isometry} is a bounded linear map  $\varphi : \mathcal{H} \rightarrow \mathcal{H}'$ such that $\varphi^\dag \varphi = \Id_\mathcal{H}$.  
\begin{proposition}
	Let $\mathcal{H}$ and $\mathcal{H}'$ be Hilbert spaces and let  $\varphi : \mathcal{H} \rightarrow \mathcal{H}'$ be a  bounded linear isometry. Then $\varphi$ is morphism of semiheaps
	$$\varphi : (\mathcal{H}, [-,-,-]) \longrightarrow (\mathcal{H}',[-,-,-]')\,. $$
\end{proposition}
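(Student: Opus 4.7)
The plan is to unpack both sides of the required identity $\varphi\bigl([|\psi_1\rangle,|\psi_2\rangle,|\psi_3\rangle]\bigr) = \bigl[\varphi|\psi_1\rangle,\varphi|\psi_2\rangle,\varphi|\psi_3\rangle\bigr]'$ using Definition \ref{Def:TerProdVec} and then reduce the equality to the defining isometry condition $\varphi^\dag \varphi = \Id_{\mathcal{H}}$. There are really only two ingredients: the $\C$-linearity of $\varphi$, which lets us pull the scalar inner product outside, and the preservation of the inner product, which lets us identify the remaining scalars on the two sides.

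First, I would compute the left-hand side. By the definition of the vector ternary product on $\mathcal{H}$, the argument of $\varphi$ is the vector $|\psi_1\rangle\,\langle\psi_2|\psi_3\rangle$, i.e.\ the vector $|\psi_1\rangle$ rescaled by the complex number $\langle\psi_2|\psi_3\rangle$. Since $\varphi$ is $\C$-linear, applying $\varphi$ yields $\langle\psi_2|\psi_3\rangle\,\varphi|\psi_1\rangle$. Next, I would compute the right-hand side in $\mathcal{H}'$: by the definition of the vector ternary product on $\mathcal{H}'$, it equals $\varphi|\psi_1\rangle\,\langle\varphi\psi_2|\varphi\psi_3\rangle'$.

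The remaining task is to show $\langle\varphi\psi_2|\varphi\psi_3\rangle' = \langle\psi_2|\psi_3\rangle$. This is the content of the isometry assumption: by the definition of the adjoint and the hypothesis $\varphi^\dag\varphi = \Id_{\mathcal{H}}$, we have $\langle\varphi\psi_2|\varphi\psi_3\rangle' = \langle\psi_2|\varphi^\dag\varphi|\psi_3\rangle = \langle\psi_2|\psi_3\rangle$. Substituting this into the right-hand side gives $\langle\psi_2|\psi_3\rangle\,\varphi|\psi_1\rangle$, matching the left-hand side. Since the identity holds for arbitrary $|\psi_1\rangle,|\psi_2\rangle,|\psi_3\rangle \in \mathcal{H}$, the map $\varphi$ is a homomorphism of semiheaps.

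I do not expect any genuine obstacle here; the statement is essentially a bookkeeping verification once one observes that only the inner-product-preserving property of $\varphi$ (not its boundedness, surjectivity, or invertibility) is used in identifying the two scalar factors. The mildly subtle point, worth flagging in the write-up, is that conjugate linearity in the second slot of the ternary product causes no trouble because we are comparing the \emph{same} scalar $\langle\psi_2|\psi_3\rangle$ on both sides rather than pulling a scalar through $\varphi$ in the second argument.
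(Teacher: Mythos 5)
Your proposal is correct and follows exactly the same route as the paper's proof: pull the scalar $\langle\psi_2|\psi_3\rangle$ out by $\C$-linearity, then insert $\varphi^\dag\varphi = \Id_{\mathcal{H}}$ to identify it with $\langle\varphi\psi_2|\varphi\psi_3\rangle'$. Your closing remark about conjugate linearity in the second slot causing no trouble is a sensible clarification, but the argument is otherwise the one given in the paper.
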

\begin{proof}
	Directly, using $\C$-linearity and the condition that the bounded linear map be an isometry, we observe that
	$$\varphi [|\psi_1 \rangle , |\psi_2 \rangle , |\psi_3\rangle] = \varphi\big ( | \psi_1 \rangle \langle \psi_2 | \psi_3 \rangle \big) = \varphi\big ( | \psi_1 \rangle\big) \langle \psi_2 | \psi_3 \rangle = \varphi\big ( | \psi_1 \rangle\big) \langle \psi_2 | \varphi^\dag \varphi|\psi_3 \rangle =  [\varphi|\psi_1 \rangle , \varphi|\psi_2 \rangle , \varphi|\psi_3\rangle]'\,.$$	
\end{proof}
\begin{remark}
	If we consider bounded linear maps that are not isometries, then we will not, in general, have a homomorphism of the relevant semiheaps.
\end{remark}
As we are considering linear maps, it is clear that bounded linear isometries are also ternary algebra homomorphisms. \par 
Unitary operators, i.e., bounded operators such that $U^\dag U =  U U^\dag = \Id_{\mathcal{H}}$, form a group, $\mathcal{U}(\mathcal{H})$,  and their action on $\mathcal{H}$ are isometries. In particular, the action $\rho_U : \mathcal{H} \rightarrow \mathcal{H}$ is $|\psi \rangle \mapsto U |\psi\rangle$ for arbitrary $U \in \mathcal{U}(\mathcal{H})$. We then have the following corollary. 
\begin{corollary}
	Let $\mathcal{U}(\mathcal{H})$ be the group of unitary operators on a Hilbert space $\mathcal{H}$. Furthermore, let $(\mathcal{H},  [-,-,-])$ be the associated semiheap. Then the action on $\mathcal{U}(\mathcal{H})$ on $\mathcal{H}$ is a semiheap isomorphism and so an isomorphism of ternary algebras.
\end{corollary}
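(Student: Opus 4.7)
The plan is to reduce the corollary directly to the preceding proposition. For each $U \in \mathcal{U}(\mathcal{H})$, the action $\rho_U : \mathcal{H} \to \mathcal{H}$, $|\psi\rangle \mapsto U|\psi\rangle$, is by definition $\C$-linear and bounded, and the defining relation $U^\dag U = \Id_\mathcal{H}$ of unitarity is precisely the isometry hypothesis of the preceding proposition. Hence $\rho_U$ is a bounded linear isometry of $\mathcal{H}$ into itself, so the proposition immediately yields that $\rho_U$ is a semiheap homomorphism $(\mathcal{H},[-,-,-]) \to (\mathcal{H},[-,-,-])$. Because $\rho_U$ is linear, the observation made just before the corollary already ensures that it is simultaneously a homomorphism of the vector ternary algebra.

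To promote this to an isomorphism I would exploit the second half of unitarity, $U U^\dag = \Id_\mathcal{H}$, which is exactly the condition distinguishing a unitary from a mere isometry. Combined with $U^\dag U = \Id_\mathcal{H}$, this shows that $\rho_U$ is bijective with two-sided inverse $\rho_{U^\dag}$. Since $U^\dag$ is itself unitary, rerunning the previous paragraph with $U$ replaced by $U^\dag$ shows that the inverse $\rho_{U^\dag}$ is also a homomorphism of semiheaps and of ternary algebras. A bijective homomorphism whose inverse is a homomorphism is, by definition, an isomorphism, and this is exactly what the corollary claims.

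There is essentially no obstacle here, as every ingredient has already been set up: the preceding proposition gives the homomorphism property, linearity of $U$ upgrades it to a ternary-algebra homomorphism, and the extra identity $U U^\dag = \Id_\mathcal{H}$ supplies the inverse. The only point worth flagging in the written-up proof is the distinction between the isometry condition $U^\dag U = \Id_\mathcal{H}$, which suffices for the homomorphism half, and the full unitarity $U U^\dag = U^\dag U = \Id_\mathcal{H}$, which is genuinely needed to obtain surjectivity and hence an isomorphism. One may optionally close with the remark that the assignment $U \mapsto \rho_U$ is in fact a group homomorphism from $\mathcal{U}(\mathcal{H})$ into the automorphism group of the semiheap, since $\rho_{UV} = \rho_U \circ \rho_V$ and $\rho_{\Id_\mathcal{H}} = \Id_\mathcal{H}$.
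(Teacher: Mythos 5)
Your proposal is correct and follows exactly the route the paper intends: the corollary is stated without a written proof, as an immediate consequence of the preceding proposition (unitaries are bounded linear isometries, hence semiheap and ternary-algebra homomorphisms), with invertibility via $U^\dag$ supplying the isomorphism. Your explicit spelling-out of the inverse $\rho_{U^\dag}$ and the distinction between isometry and full unitarity is a helpful clarification but not a different argument.
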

Symmetries in quantum mechanics are usually understood as \emph{projective representations} of some group $G$. That is, we have  a map
$$U : G \longrightarrow \mathcal{U}(\mathcal{H} )\, , $$
such that $U(g_1)U(g_2) = \omega(g_1, g_2) \, U(g_1, g_2)$, with $\omega : G \times G \rightarrow U(1) := \{z \in \C, ~~ | ~ |z| =1 \}$, being referred to as the \emph{Schur factor}. Associativity implies that $\omega(g_1,g_2)\omega(g_1 g_2, g_3) = \omega(g_1, g_2 g_3) \omega(g_2, g_3)$. Assuming that  $U(e) = \Id_{\mathcal{H}}$ (as standard) implies that $\omega(e,e) = 1$. One can also deduce that $\omega(g,e) = \omega(e,g)=1$ and $\omega(g,g^{-1}) = \omega(g^{-1},g)$. If $\omega(g_1, g_2) = 1$ for all $g_1, g_2 \in G$, then we have a \emph{unitary representation}. Wigner's theorem (see \cite{Wigner:1959}) tells us that symmetries in quantum mechanics act via either projective or unitary representations.  We thus, in general, have an ``action up to a factor'' $\rho_U(-): G \times \mathcal{H} \rightarrow \mathcal{H}$  given by $(g,| \psi \rangle ) \mapsto U(g) | \psi \rangle$.
\begin{corollary}
	Let $\mathcal{U}(\mathcal{H})$ be the group of unitary operators on a Hilbert space $\mathcal{H}$ and let $U : G \longrightarrow \mathcal{U}(\mathcal{H} )$ be a projective representation. Furthermore, let $(\mathcal{H},  [-,-,-])$ be the associated semiheap. Then, for any $g \in G$, $\rho_U(g) :\mathcal{H} \rightarrow \mathcal{H}$ is a semiheap homomorphism and so a homomorphism of ternary algebras.
\end{corollary}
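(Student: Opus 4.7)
The plan is to reduce the statement to the previous corollary on unitary operators, observing that the projective character of the representation is irrelevant once a single group element is fixed. The semiheap (and ternary algebra) morphism condition for a map $\varphi$ involves only a single application of $\varphi$ to each argument of $[-,-,-]$; it never calls on a composition of two such maps. Since the Schur factor $\omega$ only intervenes when two values $U(g_1)$ and $U(g_2)$ are composed, it plays no role here.

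First I would fix an arbitrary $g \in G$. By the definition of a projective representation, $U(g)$ lies in $\mathcal{U}(\mathcal{H})$, so $U(g)^\dagger U(g) = \Id_{\mathcal{H}}$; in particular $U(g)$ is a bounded linear isometry. The map $\rho_U(g) : \mathcal{H} \to \mathcal{H}$, $|\psi\rangle \mapsto U(g)|\psi\rangle$, therefore coincides with the natural action of a single unitary operator on $\mathcal{H}$.

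Second, I would invoke the immediately preceding corollary (equivalently, the Proposition on bounded linear isometries as semiheap morphisms): since $U(g)$ is unitary, the calculation
$$U(g)\,[|\psi_1\rangle,|\psi_2\rangle,|\psi_3\rangle] \;=\; U(g)|\psi_1\rangle\,\langle \psi_2|\psi_3\rangle \;=\; U(g)|\psi_1\rangle\,\langle \psi_2|U(g)^\dagger U(g)|\psi_3\rangle \;=\; \bigl[U(g)|\psi_1\rangle,U(g)|\psi_2\rangle,U(g)|\psi_3\rangle\bigr]$$
shows that $\rho_U(g)$ preserves the ternary product. Combining this with the $\C$-linearity of $U(g)$, as remarked just before the statement, promotes $\rho_U(g)$ to a homomorphism of ternary algebras.

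There is really no hard step. The only conceptual subtlety one might worry about is whether the phase ambiguity inherent in projective representations could spoil the morphism property — but this concern evaporates because the statement is parametrised by a single $g$, so $\omega$ never enters. (Even at the level of a full intertwining of the $G$-action one would note that a scalar $c$ with $|c|=1$ is absorbed by the ternary product via the pattern $c \cdot \bar c \cdot c = c$, but this lies beyond what the corollary asserts.)
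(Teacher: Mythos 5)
Your proposal is correct and follows exactly the route the paper intends: the corollary is stated without a separate proof precisely because, for each fixed $g$, the operator $U(g)$ is unitary and hence a bounded linear isometry, so the preceding proposition applies directly and the Schur factor never enters. Your explicit remark that the phase ambiguity is irrelevant for a single group element is a helpful clarification but does not change the argument.
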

\begin{remark}
	The dual of a Hilbert space also comes with the canonical structure of a semiheap and ternary algebra by defining $[ \langle \psi_3 |  , \langle \psi_2 |, \langle \psi_1 |] := \langle \psi_3 | \psi_2 \rangle \langle \psi_1 |$. By construction we have $[|\psi_1 \rangle, |\psi_2\rangle , | \psi_3\rangle]^\dag = [ \langle \psi_3 |  , \langle \psi_2 |, \langle \psi_1 |] $. Note that although we can canonically identify a Hilbert space and its dual, we consider them as distinct spaces. 
\end{remark}
The vector ternary product can be extended to direct sums of Hilbert spaces as follows.  Recall that the (orthogonal) direct sum $\mathcal{H} =  \mathcal{H}_1 \oplus \mathcal{H}_2$ comes equipped with an inner product given by 
$$\big (  |\psi_1 \rangle + |\phi_1 \rangle, \,  |\psi_2 \rangle + |\phi_2 \rangle \big) \longmapsto  \langle \psi_1 |\psi_2 \rangle + \langle \phi_1 |\phi_2 \rangle\,. $$
Then, the vector ternary product is given by
$$[ |\psi_1 \rangle + |\phi_1 \rangle, |\psi_2 \rangle + |\phi_2 \rangle,  |\psi_3 \rangle + |\phi_3 \rangle] :=   | \psi_1 \rangle \langle \psi_2 | \psi_3 \rangle +  | \phi_1 \rangle \langle \phi_2 | \phi_3 \rangle = [ |\psi_1 \rangle , |\psi_2 \rangle ,  |\psi_3 \rangle ] + [  |\phi_1 \rangle, |\phi_2 \rangle    + |\phi_3 \rangle]\,.$$
This construction extends to the orthogonal direct sum of any finite number of Hilbert spaces.
\begin{example}
	In supersymmetric quantum mechanics, the relevant Hilbert space is the (orthogonal) direct sum on the bosonic sector $\mathcal{H}_0$ and the fermionic sector $\mathcal{H}_1$, i.e., $\mathcal{H} =  \mathcal{H}_0 \oplus \mathcal{H}_1$.  Of course, being orthogonal implies that linear combinations of bosonic and fermionic states cannot be physically realised. Nonetheless, we can still consider the vector ternary product on the direct sum as the sum of two vector ternary products on each sector.
\end{example}
Similarly, the vector ternary product can be extended to the tensor product of Hilbert spaces. We denote the (completed)  tensor product as $\mathcal{H} = \mathcal{H}_1 \otimes \mathcal{H}_2$. We remark that composite quantum systems are always described via the tensor products of their components.  Basic elements of $\mathcal{H}$ are pairs which, as standard,  we write as $| \psi \rangle \otimes |\phi\rangle$. The inner product (used for the completion) is, on basic elements, given by 
$$\big ( | \psi_1 \rangle \otimes |\phi_1\rangle, \,  | \psi_2 \rangle \otimes |\phi_2\rangle \big) \longmapsto \langle \psi_1| \psi_2 \rangle \, \langle \phi_1| \phi_2 \rangle\,,$$
which is then extended via linearity. The vector ternary product (on basic elements) is given by
\begin{align*}
[| \psi_1 \rangle \otimes |\phi_1\rangle,  | \psi_2 \rangle \otimes |\phi_2\rangle, | \psi_3 \rangle \otimes |\phi_3\rangle   ] & : = (| \psi_1 \rangle \otimes |\phi_1 \rangle)\langle \psi_2|\psi_3\rangle  \langle \phi_2|\phi_3\rangle\\
& =  | \psi_1 \rangle \langle \psi_2|\psi_3\rangle \otimes |\phi_1 \rangle  \langle \phi_2|\phi_3\rangle\\
& = [| \psi_1 \rangle ,  | \psi_2 \rangle , | \psi_3 \rangle    ] \otimes [| \phi_1 \rangle ,  | \phi_2 \rangle , | \phi_3 \rangle    ] \,.
\end{align*}
We observe that quite as expected, the vector ternary product on a tensor product of Hilbert spaces is the tensor product of the vector ternary products.  This construction then generalises to any finite tensor product of Hilbert spaces.
\subsection{Bounded linear operators and their ternary algebra}
We will denote the $*$-algebra of bounded (so, continuous) operators on $\mathcal{H}$ by $\mathcal{B}(\mathcal{H})$. Following our previous notation, we may also mean by  $\mathcal{B}(\mathcal{H})$ just the set of bounded linear operators, the context should be clear. If we want to consider just the vector space structure then we will write $( \mathcal{B}(\mathcal{H}), +)$.
\begin{definition}\label{Def:ConTerProdOp}
	Let $\mathcal{H}$ be a Hilbert space and let  $\mathcal{B}(\mathcal{H})$ be the  the $*$-algebra of bounded operators on $\mathcal{H}$. The \emph{operator ternary product}  $[-,-,-]: \mathcal{B}(\mathcal{H}) \times\mathcal{B}(\mathcal{H}) \times \mathcal{B}(\mathcal{H}) \longrightarrow \mathcal{B}(\mathcal{H}) $  is defined as 
	$$[A_1, A_2, A_3] :=A_1 A_2^\dag A_3\,.$$
\end{definition}
\begin{remark}
	We focus on bounded linear operators to avoid mathematical subtleties with taking adjoints and forming algebras under composition. 
\end{remark}
\begin{remark}
	The ternary product of bounded operators is closely related to the notion of a ternary ring of operators between Hilbert spaces as first introduced by  Hestenes \cite{Hestenes:1962} and extended to the $C^*$-algebra case by Zettl \cite{Zettl:1983}.
\end{remark}
\begin{proposition}\label{Prop:OpTernProd}
	The operator ternary product on $\mathcal{B}(\mathcal{H})$, see Definition \ref{Def:ConTerProdOp}, 
	\begin{enumerate}
		\item is linear in the first and third arguments, conjugate linear in the second argument, and  \label{Prop:LinOpTer}
		\item satisfies the para-associative law, or in other words, $(\mathcal{B}(\mathcal{H}), [-,-,-])$ is a semiheap.  \label{Prop:SimHepOpTer}
	\end{enumerate}	
\end{proposition}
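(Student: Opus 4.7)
The plan is to verify both claims by direct computation, mirroring the proof of Theorem \ref{Trm:ParAssTerProdVec}. For part \eqref{Prop:LinOpTer}, linearity in the first and third arguments is immediate from the bilinearity of operator composition, since $[A_1 + c\,A_1', A_2, A_3] = (A_1 + cA_1')A_2^\dag A_3 = A_1 A_2^\dag A_3 + c\, A_1' A_2^\dag A_3$, and similarly for the third slot. Conjugate linearity in the middle slot follows from the fact that the adjoint is an antilinear involution, so $(A_2 + c\,A_2')^\dag = A_2^\dag + \bar{c}\, A_2'^\dag$, which produces the complex conjugate factor $c^*$ in front of the second term.

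For part \eqref{Prop:SimHepOpTer}, I would compute the three expressions in the para-associative law and observe that each collapses to the common five-term string $A_1 A_2^\dag A_3 A_4^\dag A_5$. Explicitly:
\begin{align*}
\bigl[[A_1,A_2,A_3], A_4, A_5\bigr] &= (A_1 A_2^\dag A_3)\, A_4^\dag A_5 = A_1 A_2^\dag A_3 A_4^\dag A_5,\\
\bigl[A_1, [A_4,A_3,A_2], A_5\bigr] &= A_1 \,(A_4 A_3^\dag A_2)^\dag\, A_5 = A_1 A_2^\dag A_3 A_4^\dag A_5,\\
\bigl[A_1, A_2, [A_3,A_4,A_5]\bigr] &= A_1 A_2^\dag \,(A_3 A_4^\dag A_5) = A_1 A_2^\dag A_3 A_4^\dag A_5.
\end{align*}
The only non-trivial identity used is the anti-multiplicativity of the adjoint, $(XY)^\dag = Y^\dag X^\dag$, together with $A^{\dag\dag}=A$, which is precisely what makes the middle expression work out: $(A_4 A_3^\dag A_2)^\dag = A_2^\dag (A_3^\dag)^\dag A_4^\dag = A_2^\dag A_3 A_4^\dag$.

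There is no real obstacle here; the statement is a routine consequence of the axioms of a $*$-algebra. The point worth emphasising in the write-up is the structural reason the three terms coincide: the reversal in the middle slot (note the order $A_4, A_3, A_2$ rather than $A_2, A_3, A_4$) is exactly compensated by the order reversal produced by the adjoint, which is why the para-associative law, rather than ordinary associativity, is the correct identity for a ternary product built from a conjugate-linear middle argument. Closure of $\mathcal{B}(\mathcal{H})$ under the ternary product is automatic since $\mathcal{B}(\mathcal{H})$ is a $*$-algebra.
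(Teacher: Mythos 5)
Your proof is correct and follows essentially the same route as the paper: part (1) from the bilinearity of composition and the antilinearity of the adjoint, and part (2) by computing all three bracketings down to the common string $A_1 A_2^\dag A_3 A_4^\dag A_5$, with $(A_4 A_3^\dag A_2)^\dag = A_2^\dag A_3 A_4^\dag$ as the only substantive step. Your added remark on why the middle-slot reversal is compensated by the adjoint is a nice touch but not a different argument.
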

\begin{proof}
	Part \eqref{Prop:LinOpTer} is clear from the definition. Part \eqref{Prop:SimHepOpTer} follows from a direct calculation.  Specifically,
	\begin{align*}
	& \big [ [A_1 ,A_2, A_3], A_4, A_5 \big] = A_1 A_2^\dag A_3 A_4^\dag A_5\,, \\
	& \big [ A_1 , [ A_4, A_3, A_2], A_5 \big] =  [A_1, A_4 A_3^\dag A_2, A_5 ] = A_1(A_4 A_3^\dag A_2)^\dag A_5 =  A_1 A_2^\dag A_3 A_4^\dag A_5\,, \\
	& \big [ A_1 ,A_2, [A_3, A_4, A_5] \big] =   A_1 A_2^\dag A_3 A_4^\dag A_5\,.
	\end{align*}
\end{proof}
\begin{definition}
	Let   $\mathcal{B}(\mathcal{H})$ be the  the $*$-algebra of bounded operators on a Hilbert space $\mathcal{H}$. Then the ternary algebra  $(\mathcal{B}(\mathcal{H}), +, [-,-,-])$ defined via Proposition \ref{Prop:OpTernProd} is referred to as the \emph{operator ternary algebra}.
\end{definition}
\begin{example}
	Considering the complex line, it is clear that $ \mathcal{B}(\mathbb{C}) = \textnormal{Mat}_{1\times 1}(\mathbb{C}) = \mathbb{C}$.  Thus, the operator and vector ternary products are identical, see Example \ref{Exp:ComplexLine}.  
\end{example}
\begin{example}
	Continuing Example \ref{Exp:SpinSpace}, as the Hilbert space is isomorphic to $\C^2$, it is clear that $\mathcal{B}(\C^2) \cong \textnormal{Mat}_{2\times 2}(\mathbb{C})$. To set some notation, we denote the components of a matrix with respect to the standard basis as $A_i^{\,\,j}$ and the components of the Hermitian conjugate as $\bar{A}^{j}_{\,\, i}$. Then the components of the operator ternary product are
	$$[A,B,C]_i ^{\,\,j} = A_i^{\,\,k} \bar{B}^{l}_{\,\, k}C_l^{\,\,j}\,. $$  
	The operator ternary product for $\C^n$ $(n \in \mathbb{N})$ is of the above from.
\end{example}
As mentioned earlier, unitary operators, i.e., bounded operators such that $U^\dag U =  U U^\dag = \Id_{\mathcal{H}}$, form a group. Because we have the structure of a group and $U^{-1} = U^\dag$, we have the following corollary. Alternatively, one needs only check the Mal’cev identities, and in this case, it is obvious they hold.
\begin{corollary}
	The group of unitary operators $\mathcal{U}\big ( \mathcal{H}\big)$ on a Hilbert space $\mathcal{H}$ is a heap under the operator ternary product.
\end{corollary}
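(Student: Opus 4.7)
The plan is to verify the three things needed for $\mathcal{U}(\mathcal{H})$ to be a heap under the operator ternary product: closure, para-associativity, and the Mal'cev (biunitarity) identities.

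First I would check \emph{closure}, i.e.\ that $[U_1,U_2,U_3] = U_1 U_2^\dag U_3$ is again unitary whenever $U_1,U_2,U_3 \in \mathcal{U}(\mathcal{H})$. A direct computation gives
$$(U_1 U_2^\dag U_3)^\dag (U_1 U_2^\dag U_3) = U_3^\dag U_2 U_1^\dag U_1 U_2^\dag U_3 = U_3^\dag U_2 U_2^\dag U_3 = \Id_{\mathcal{H}},$$
and similarly $(U_1 U_2^\dag U_3)(U_1 U_2^\dag U_3)^\dag = \Id_{\mathcal{H}}$, so the operator ternary product restricts to a ternary operation on $\mathcal{U}(\mathcal{H})$.

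Next, \emph{para-associativity} on $\mathcal{U}(\mathcal{H})$ is inherited immediately from Proposition~\ref{Prop:OpTernProd}\eqref{Prop:SimHepOpTer}, since $\mathcal{U}(\mathcal{H}) \subset \mathcal{B}(\mathcal{H})$ and the ternary product is the same. Finally, for the \emph{Mal'cev identities}, the defining property $U^\dag U = U U^\dag = \Id_{\mathcal{H}}$ of unitary operators yields, for any $A,U \in \mathcal{U}(\mathcal{H})$,
$$[A,U,U] = A\, U^\dag U = A\, \Id_{\mathcal{H}} = A, \qquad [U,U,A] = U U^\dag A = \Id_{\mathcal{H}}\, A = A,$$
so every element of $\mathcal{U}(\mathcal{H})$ is biunitary. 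Combining closure, para-associativity and biunitarity gives that $\big(\mathcal{U}(\mathcal{H}), [-,-,-]\big)$ is a heap.

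There is really no obstacle here: once one observes that $U^\dag$ acts as the inverse in the ternary formula, each of the three verifications is a one-line computation. The only point worth emphasising (as the corollary statement already notes, via the alternative route through the associated group structure with $U^{-1}=U^\dag$) is that closure must be checked, since $\mathcal{U}(\mathcal{H})$ is not a linear subspace of $\mathcal{B}(\mathcal{H})$ and so the heap structure does not come for free from the ambient ternary algebra.
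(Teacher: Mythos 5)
Your proof is correct and follows essentially the route the paper itself indicates: the paper justifies the corollary by noting that $\mathcal{U}(\mathcal{H})$ is a group with $U^{-1}=U^\dag$ (so the ternary product restricts to the canonical heap operation $U_1U_2^{-1}U_3$), and explicitly offers the direct check of the Mal'cev identities as an alternative, which is exactly what you carry out. Your explicit verification of closure is a sensible addition that the paper leaves implicit in the group structure.
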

As standard, we will denote the commutator of bounded operators as $[A_1, A_2] := A_1 A_2 - A_2 A_1$, for arbitrary $A_1$ and $A_2 \in \mathcal{B}(\mathcal{H})$.  We remind the reader that $[A_1, A_2]^\dag = - [A_1^\dag, A_2^\dag]$, and that we can cast the Jacobi identity into the Jacobi--Leibniz form
\begin{equation}\label{Eqn:JacIdeCom}
[A_1, [A_2, A_3]] = [[A_1, A_2] , A_3] + [A_2, [A_1, A_3]]\,.
\end{equation}
\begin{proposition}\label{Prop:GenJacLeiRule}
	The following identity holds for the operator ternary product on $\mathcal{B}(\mathcal{H})$, see Definition \ref{Def:ConTerProdOp},
	$$\big [ A_1, [A_2, A_3 ,A_4] \big] = \big [ [A_1, A_2], A_3 , A_4  \big ] - \big[   A_2,[A_1^\dag,A_3] , A_4\big ] + \big [A_2, A_3 , [A_1, A_4]\big]\,,$$
	for all $A_1, A_2, A_3$ and $A_4 \in\mathcal{B}(\mathcal{H}) $.
\end{proposition}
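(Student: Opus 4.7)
The plan is to reduce the identity to the elementary fact that $[A_1,-]$ is a derivation of the ordinary associative composition, and then to track how the dagger in the middle slot of the ternary product moves under this derivation.

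First I would unfold the left-hand side using the definition $[B,C,D]=BC^\dag D$, so that
$$\big[A_1,[A_2,A_3,A_4]\big] = \big[A_1,\,A_2\,A_3^\dag\,A_4\big].$$
Since $[A_1,-]$ is a derivation of the associative product on $\mathcal{B}(\mathcal{H})$, this equals
$$[A_1,A_2]\,A_3^\dag\,A_4 \;+\; A_2\,[A_1,A_3^\dag]\,A_4 \;+\; A_2\,A_3^\dag\,[A_1,A_4].$$
The first term is $\big[[A_1,A_2],A_3,A_4\big]$ and the third is $\big[A_2,A_3,[A_1,A_4]\big]$, exactly matching the outer two terms on the right-hand side of the claimed identity.

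The only non-trivial step is therefore to identify the middle piece $A_2\,[A_1,A_3^\dag]\,A_4$ with $-\big[A_2,[A_1^\dag,A_3],A_4\big]$. For this I would use the antimultiplicativity of the dagger: $(XY)^\dag = Y^\dag X^\dag$ and $(A^\dag)^\dag=A$ give
$$[A_1^\dag,A_3]^\dag = (A_1^\dag A_3)^\dag - (A_3 A_1^\dag)^\dag = A_3^\dag A_1 - A_1 A_3^\dag = -[A_1,A_3^\dag].$$
Substituting this into the middle term and using the definition of the operator ternary product once more,
$$A_2\,[A_1,A_3^\dag]\,A_4 = -A_2\,[A_1^\dag,A_3]^\dag\,A_4 = -\big[A_2,\,[A_1^\dag,A_3],\,A_4\big],$$
which supplies precisely the minus sign appearing in front of the middle term on the right-hand side.

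Assembling the three contributions yields the stated equality. The hard part is really just bookkeeping the sign produced by the antilinearity of the dagger on commutators; no deeper structural argument is required, and the computation can be written in a few lines. One could alternatively do a brute-force expansion of each of the four ternary brackets on the right and check term-by-term cancellation of the six monomials that appear, but the derivation-based route above makes transparent why the formula is a ``twisted'' Jacobi--Leibniz rule, in which the twist by $\dag$ in the middle slot of $[-,-,-]$ is balanced by the twist $A_1 \mapsto A_1^\dag$ in the middle commutator on the right.
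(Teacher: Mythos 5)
Your proof is correct. The paper's own proof is the bare six-monomial computation: it expands $\big[A_1,[A_2,A_3,A_4]\big]=A_1A_2A_3^\dag A_4-A_2A_3^\dag A_4A_1$, inserts and cancels the two cross terms $A_2A_1A_3^\dag A_4$ and $A_2A_3^\dag A_1A_4$ by hand, and regroups. Your argument carries out the identical cancellation but packages the add-and-subtract step as the statement that $[A_1,-]$ is a derivation of the associative product, and then isolates the single genuinely ternary-specific point --- the dagger in the middle slot --- via $[A_1^\dag,A_3]^\dag=-[A_1,A_3^\dag]$, which is just the paper's own observation $[A,B]^\dag=-[A^\dag,B^\dag]$ specialised to $A=A_1^\dag$, $B=A_3$. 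Your middle-term computation checks out: $\big[A_2,[A_1^\dag,A_3],A_4\big]=A_2\big([A_1^\dag,A_3]\big)^\dag A_4=-A_2[A_1,A_3^\dag]A_4$, which supplies the minus sign. What your packaging buys is an explanation rather than a verification: the outer two terms are forced by the ordinary Leibniz rule, and the sign together with the twist $A_1\mapsto A_1^\dag$ in the middle term is exactly the cost of pushing a commutator through the conjugate-linear slot; this also makes the subsequent corollary (that $[\rmi A_1,-]$ is an honest derivation when $A_1$ is self-adjoint) transparent. At the level of monomials the two proofs coincide, so the difference is organisational rather than mathematical.
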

\begin{proof}
	Directly we observe that
	\begin{align*}
	\big[ A_1, [A_2, A_3 , A_4]\big] & = A_1 A_2 A_3^\dag A_4 - A_2  A_3^\dag A_4 A_1 \\
	& = A_1 A_2 A_3^\dag A_4 - A_2  A_3^\dag A_4 A_1 - A_2 A_1 A_3^\dag A_4\\
	& + A_2 A_1 A_3^\dag A_4 - A_2 A_3^\dag A_1 A_4 +  A_2 A_3^\dag A_1 A_4\\
	& = \big [ [A_1, A_2], A_3 , A_4  \big ] - \big[   A_2,[A_1^\dag,A_3] , A_4\big ] + \big [A_2, A_3 , [A_1, A_4]\big]\,.
	\end{align*}
\end{proof}
We interpret Proposition \ref{Prop:GenJacLeiRule} as a generalised version of the Leibniz rule for the commutator over the ternary product, and this should be compared with \eqref{Eqn:JacIdeCom}.  We make the following observation.
\begin{corollary}
	If $A_1 \in\mathcal{B}(\mathcal{H}) $ is self-adjoint, i.e., $A_1^\dag =  A_1$, then $[\rmi A_1,-]$ is a derivation over the operator ternary product on $\mathcal{B}(\mathcal{H}) $, i.e.,
	$$\big [ \rmi A_1, [A_2, A_3 ,A_4] \big] = \big [ [\rmi A_1, A_2], A_3 , A_4  \big ] + \big[   A_2,[\rmi A_1 ,A_3] , A_4\big ] + \big [A_2, A_3 ,  [\rmi A_1, A_4]\big]\,.$$
\end{corollary}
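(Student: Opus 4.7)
The plan is to deduce the corollary directly from Proposition \ref{Prop:GenJacLeiRule} by substituting $\rmi A_1$ in place of $A_1$ and using the self-adjointness hypothesis to flip the sign of the middle term on the right-hand side. No independent calculation should be needed; the identity of Proposition \ref{Prop:GenJacLeiRule} already does all the combinatorial work.

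The crucial observation is that self-adjointness $A_1^\dag = A_1$ gives $(\rmi A_1)^\dag = -\rmi A_1^\dag = -\rmi A_1$. Therefore the middle term of the general Leibniz identity, which reads
$$-\big[A_2,[(\rmi A_1)^\dag, A_3], A_4\big],$$
becomes $-\big[A_2,[-\rmi A_1, A_3], A_4\big]$. By bilinearity of the commutator, $[-\rmi A_1, A_3] = -[\rmi A_1, A_3]$, and this minus sign can be pulled out of the second slot of the ternary product. Since $-1$ is real, it does not matter whether the slot is linear or conjugate-linear: the scalar extracts as $-1$ in either case, producing a second sign flip that converts the original minus into a plus. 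The outer two terms of Proposition \ref{Prop:GenJacLeiRule} carry over unchanged with $A_1$ replaced by $\rmi A_1$, and assembling everything gives the claimed derivation identity.

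There is no genuine obstacle here; the only point requiring care is tracking the two sign flips (one from $(\rmi A_1)^\dag = -\rmi A_1$, one from factoring $-1$ out of the second slot) so that they combine to convert the minus sign in Proposition \ref{Prop:GenJacLeiRule} into the plus sign required by the Leibniz rule. One might alternatively present the proof as a short direct computation paralleling the proof of Proposition \ref{Prop:GenJacLeiRule}, but invoking that proposition as a lemma is cleaner and emphasises that self-adjointness of $A_1$ is precisely the condition needed to upgrade the generalised Leibniz rule to a genuine derivation rule.
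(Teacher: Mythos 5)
Your proposal is correct and is exactly the argument the paper intends: the corollary is stated as an immediate consequence of Proposition \ref{Prop:GenJacLeiRule}, obtained by substituting $\rmi A_1$ and using $(\rmi A_1)^\dag = -\rmi A_1$ together with the (conjugate-)linearity of the second slot to turn the minus sign into a plus. Your remark that the real scalar $-1$ extracts identically from a linear or conjugate-linear slot is the one point worth being careful about, and you handled it correctly.
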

The unitary group $\mathcal{U}(\mathcal{H})$ acts on $\mathcal{B}(\mathcal{H})$ via similarity transformations. That is, $\rho_U : \mathcal{B}(\mathcal{H}) \rightarrow \mathcal{B}(\mathcal{H})$ is given by $A \mapsto U^\dag A U$, for arbitrary $U \in\mathcal{U}(\mathcal{H}) $. We then have the following proposition.
\begin{proposition}
	Let $\big (\mathcal{B}(\mathcal{H}), [-,-,-]\big)$ be the semiheap associated with bounded linear operators on a Hilbert space $\mathcal{H}$. Then,  the action of the unitary group  $\mathcal{U}(\mathcal{H})$ on $\mathcal{B}(\mathcal{H})$ is a semiheap homomorphism.
\end{proposition}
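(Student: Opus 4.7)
The plan is to verify the semiheap morphism property $\rho_U\bigl([A_1,A_2,A_3]\bigr)=[\rho_U(A_1),\rho_U(A_2),\rho_U(A_3)]$ directly from Definition \ref{Def:ConTerProdOp}. The computation is a short manipulation of the similarity transformation and relies only on two standard facts: the anti-multiplicativity of the adjoint, and the full unitarity relation $U U^\dag = \Id_{\mathcal{H}}$.

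Starting from the right-hand side I would expand
\[
[\rho_U(A_1),\rho_U(A_2),\rho_U(A_3)] = (U^\dag A_1 U)(U^\dag A_2 U)^\dag(U^\dag A_3 U),
\]
and apply $(U^\dag A_2 U)^\dag = U^\dag A_2^\dag U$, using $(U^\dag)^\dag=U$. The two adjacent pairs $U U^\dag$ that arise between consecutive factors then collapse to $\Id_{\mathcal{H}}$, leaving $U^\dag (A_1 A_2^\dag A_3) U = \rho_U\bigl([A_1,A_2,A_3]\bigr)$, which is the desired identity.

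There is no genuine obstacle, but two small remarks are worth making. First, the collapse step genuinely requires $U U^\dag = \Id$, not merely the isometry condition $U^\dag U = \Id$; this is consistent with the earlier observation that non-isometric or merely isometric maps need not descend to homomorphisms of the associated semiheaps. Second, since $\rho_U$ is manifestly $\mathbb{C}$-linear, the same calculation shows that $\rho_U$ is in fact a homomorphism of the full ternary algebra $(\mathcal{B}(\mathcal{H}),+,[-,-,-])$, so one obtains the ternary algebra statement as a free corollary.
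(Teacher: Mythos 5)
Your proof is correct and is essentially the paper's own computation read in the reverse direction: the paper starts from $\rho_U\bigl([A_1,A_2,A_3]\bigr)$ and inserts $UU^\dag=\Id_{\mathcal{H}}$ between factors, whereas you start from $[\rho_U(A_1),\rho_U(A_2),\rho_U(A_3)]$ and cancel those same pairs. Your observations about needing full unitarity rather than mere isometry, and about linearity upgrading this to a ternary algebra homomorphism, are accurate and consistent with the surrounding text.
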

\begin{proof}
	The proposition is proved via direct calculation. Specifically,
	\begin{align*}
	\rho_U \big( [A_1, A_2,A_3]\big) & = U^\dag  [A_1, A_2,A_3] U = U^\dag  A_1 A_2^\dag A_3 U 
	= U^\dag A_1 U \big( U^\dag A_2^\dag U \big)U^\dag A_3 U \\
	&= [U^\dag A_1 U, U^\dag A_2 U, U^\dag A_3 U ]
	=  [\rho_U(A_1),\rho_U (A_2), \rho_U(A_3)]\,.
	\end{align*}
\end{proof}
\begin{corollary}
	Let $\mathcal{U}(\mathcal{H})$ be the group of unitary operators on a Hilbert space $\mathcal{H}$ and let $U : G \longrightarrow \mathcal{U}(\mathcal{H} )$ be a projective representation. Furthermore, let $\big (\mathcal{B}(\mathcal{H}), [-,-,-]\big)$ be the semiheap associated with bounded linear operators. Then, for any $g \in G$, $\rho_U(g) :\mathcal{\mathcal{H}} \rightarrow \mathcal{\mathcal{H}}$ is a semiheap homomorphism and so a homomorphism of ternary algebras.
\end{corollary}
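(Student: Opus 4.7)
The plan is to reduce the corollary directly to the preceding proposition, exploiting the fact that the statement concerns each $\rho_U(g)$ individually rather than their composition. By definition of a projective representation $U : G \to \mathcal{U}(\mathcal{H})$, the operator $U(g)$ lies in $\mathcal{U}(\mathcal{H})$ for every $g \in G$. Consequently, the map $\rho_U(g) : \mathcal{B}(\mathcal{H}) \to \mathcal{B}(\mathcal{H})$, $A \mapsto U(g)^\dag A U(g)$, is simply the similarity transformation associated to the unitary element $U(g)$ that was treated in the immediately preceding proposition.

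The first step, therefore, is to fix an arbitrary $g \in G$ and apply that proposition to the unitary $U(g)$; this yields the semiheap homomorphism property
$$\rho_U(g)\bigl([A_1,A_2,A_3]\bigr) = \bigl[\rho_U(g)(A_1),\, \rho_U(g)(A_2),\, \rho_U(g)(A_3)\bigr]$$
for all $A_1,A_2,A_3 \in \mathcal{B}(\mathcal{H})$. The second step is to upgrade this to a ternary algebra homomorphism: since $\rho_U(g)$ is manifestly $\mathbb{C}$-linear (being two-sided multiplication by fixed operators), and the semiheap homomorphism property has just been established, it is automatically a homomorphism of the operator ternary algebra by the definition of such a morphism given in the introduction.

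There is essentially no obstacle, since the projective nature of the representation is immaterial to the statement. The Schur factor $\omega$ governs only the composition law $U(g_1)U(g_2) = \omega(g_1,g_2)\,U(g_1 g_2)$, but nothing in the corollary requires compatibility between different $g$'s. As a side remark worth noting in the proof, the factor $\omega(g_1,g_2)$ and its complex conjugate $\omega(g_1,g_2)^*$ cancel in the similarity transformation because $|\omega(g_1,g_2)|=1$, so the assignment $g \mapsto \rho_U(g)$ descends to a genuine group action of $G$ on $\mathcal{B}(\mathcal{H})$ by semiheap automorphisms; however, this strengthening is not needed to establish the corollary as stated.
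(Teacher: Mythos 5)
Your proposal is correct and matches the paper's intended argument: the corollary is stated without proof precisely because, for each fixed $g$, the operator $U(g)$ is unitary and the preceding proposition applies verbatim to the similarity transformation $A \mapsto U(g)^\dag A U(g)$, with linearity upgrading the semiheap homomorphism to a ternary algebra homomorphism. Your observation that the Schur factor is irrelevant for a single $g$ (and cancels in compositions since $|\omega|=1$) is a correct and worthwhile clarification, but, as you note, not needed for the statement.
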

Note that $[A_1, A_2, A_3]^\dag = [A_3^\dag, A_2^\dag, A_1^\dag]$ and so the operator ternary product is well-behaved with respect to taking adjoints. We denote the set of bounded self-adjoint operators, so the bounded observables, as $\mathcal{B}_s(\mathcal{H})$. Two operators $A$ and $B\in \mathcal{B}_s(\mathcal{H})$ are said to be \emph{compatible bounded observables} if they commute, i.e., $AB = BA$.  A \emph{compatible set of bounded observables} is a subset of $\mathcal{B}_s(\mathcal{H})$ such that all elements are pairwise compatible, that is, they pairwise commute. Naturally, a sub-semiheap of a semiheap is a subset that is closed with respect to the semiheap operation.
\begin{proposition}
	Let $\mathcal{B}_s(\mathcal{H})$ be the set of bounded observables on a Hilbert space $\mathcal{H}$. Then any compatible set of bounded observables is closed with respect to the operator ternary product. In other words, any set of compatible bounded observables forms a sub-semiheap of $\big (\mathcal{B}(\mathcal{H}), [-,-,-] \big)$.
\end{proposition}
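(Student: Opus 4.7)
The plan is to reduce the statement to a one-line computation by exploiting the two defining hypotheses on a compatible set $S \subseteq \mathcal{B}_s(\mathcal{H})$: every element is self-adjoint, and any two elements commute. First, I would fix arbitrary $A_1, A_2, A_3 \in S$ and note that the self-adjointness of $A_2$ collapses the operator ternary product to an ordinary triple product,
$$[A_1, A_2, A_3] = A_1 A_2^\dag A_3 = A_1 A_2 A_3.$$
After this reduction, the rest is bookkeeping.

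The main steps I would then carry out are the following. First, I would check that $A_1 A_2 A_3$ is itself self-adjoint, using that taking adjoints reverses the order and then applying pairwise commutativity to reorder back: $(A_1 A_2 A_3)^\dag = A_3 A_2 A_1 = A_1 A_2 A_3$. Boundedness is automatic as a composition of three bounded operators. Second, I would verify that the product remains compatible with every $B \in S$; since $B$ commutes with each $A_i$ individually, one can slide $B$ past the product position by position,
$$B A_1 A_2 A_3 = A_1 B A_2 A_3 = A_1 A_2 B A_3 = A_1 A_2 A_3 B.$$
These two facts together say that $[A_1, A_2, A_3]$ is a bounded self-adjoint operator that commutes with every element of $S$, which is exactly what closure of $S$ under the operator ternary product amounts to, establishing the sub-semiheap property.

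I do not expect any real obstacle: the content is essentially unpacking the definition plus two uses of pairwise commutativity. The only mild subtlety worth flagging is an interpretive one — the statement \textbf{closed under the ternary product} is most naturally read relative to a maximal compatible set (or, equivalently, relative to the closure of $S$ under pairwise-compatible adjunctions), and under either reading the computation above suffices. Para-associativity of the resulting sub-semiheap is inherited from Proposition \ref{Prop:OpTernProd}\eqref{Prop:SimHepOpTer}, so nothing further needs to be checked.
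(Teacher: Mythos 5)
Your proof is correct and its core is the same direct computation as the paper's: use self-adjointness of the middle entry to reduce $[A_1,A_2,A_3]$ to $A_1A_2A_3$, then combine adjoint-reversal with pairwise commutativity to conclude that the product is again self-adjoint. You actually go a step further than the paper, whose proof stops after verifying $[A,B,C]^\dag=[A,B,C]$: you also check that $A_1A_2A_3$ commutes with every element $B$ of the compatible set, which is what is really needed for the product to land back \emph{in} that set rather than merely in $\mathcal{B}_s(\mathcal{H})$, at least under the maximal-set (or closure) reading of the statement that you rightly flag. That extra verification costs one line and tightens the argument; the interpretive caveat you raise is legitimate, since for a non-maximal compatible set ``closed under the ternary product'' only holds in the sense you describe. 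Nothing in your write-up would fail, and the appeal to Proposition \ref{Prop:OpTernProd} for para-associativity of the sub-semiheap is exactly right.
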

\begin{proof}
	Consider three arbitrary (not necessarily distinct) bounded observables $A, B$ and $C \in \mathcal{B}_s(\mathcal{H})$. Then directly
	$$[A,B,C]^\dag = C^\dag B A^\dag = CBA = [C,B,A]\,.$$
	Upon the assumption these bounded observables pairwise commute we see that $CBA = ABC$ and so $[A,B,C]^\dag = [A,B,C]$ as required.
\end{proof}
\subsection{Distributivity of operators and derivations}
From the definition of the vector ternary product on a Hilbert space $\mathcal{H}$, see Definition \ref{Def:TerProdVec}, we have the following ``distributive law'',
\begin{equation}\label{Eqn:DisRuleOps}
A[| \psi_1 \rangle  , | \psi_2 \rangle, | \psi_3 \rangle] = [A| \psi_1 \rangle  , | \psi_2 \rangle, | \psi_3 \rangle]\,,
\end{equation}
for all $| \psi_1 \rangle,| \psi_2 \rangle $ and $| \psi_3 \rangle \in \mathcal{H}$, and all $A\in \mathcal{B}(\mathcal{H})$. The following was first, to our knowledge, uncovered by Kerner \cite{Kerner:2008}. Let us suppose the Hilbert space in question is finite or countable infinite. Furthermore, let us fix an orthonormal basis $\{ | n \rangle\}_{n \in \mathbb{N}}$. With respect to this fixed basis, any vector and operator can be written as
$$| \psi \rangle  =  \sum_{n =1}^\infty c_n \, |n\rangle\,, \qquad A =  \sum_{l,m}^\infty a_{ml}\, |l\rangle \langle m |\,. $$ 
Then, combining the two above expressions
\begin{equation}\label{Eqn:OpsVect}
A |\psi \rangle  = \sum_{n,m,l =1}^\infty c_n a_{ml} \, |l \rangle \langle m | n \rangle  =  \sum_{n,m,l =1}^\infty c_n a_{ml} \, [|l \rangle, |m  \rangle,|n \rangle] \,.
\end{equation}
By employing semiheaps and para-associative ternary algebras, we have a unification scheme in which vectors (states) and operators (observables) are treated as the same. It is linear combinations of triplets of vectors that are central to the theory rather than separately vectors and operators. \par 
The distributivity law  \eqref{Eqn:DisRuleOps} can be written in the form of a generalised Leibniz rule, and this should directly be compared with Proposition \ref{Prop:GenJacLeiRule}. 
\begin{proposition}\label{Prop:GenLeiRul}
	Let $\mathcal{H}$ be a Hilbert space and let $[-,-,-]$ be the associated vector ternary product. Then any bounded linear operator $A \in \mathcal{B}(\mathcal{H})$ satisfies a generalised ternary Leibniz rule
	$$A[| \psi_1 \rangle  , | \psi_2 \rangle, | \psi_3 \rangle] = [A| \psi_1 \rangle  , | \psi_2 \rangle, | \psi_3 \rangle] -  [| \psi_1 \rangle  , A^\dag| \psi_2 \rangle, | \psi_3 \rangle] +   [| \psi_1 \rangle  ,| \psi_2 \rangle,A | \psi_3 \rangle]\,,$$
	for all $| \psi_1 \rangle, | \psi_2 \rangle$ and $| \psi_3 \rangle \in \mathcal{H}$.
\end{proposition}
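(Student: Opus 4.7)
The plan is to expand each of the three terms on the right-hand side directly using the definition of the vector ternary product, and to rewrite the middle term using the defining property of the adjoint, so that the last two terms cancel and only the first term survives, matching the left-hand side.

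More concretely, I would first invoke the already-established distributivity law \eqref{Eqn:DisRuleOps}, which gives
$$[A|\psi_1\rangle, |\psi_2\rangle, |\psi_3\rangle] = A|\psi_1\rangle\,\langle\psi_2|\psi_3\rangle = A[|\psi_1\rangle, |\psi_2\rangle, |\psi_3\rangle].$$
Thus it remains to verify that
$$[|\psi_1\rangle, A^\dag|\psi_2\rangle, |\psi_3\rangle] = [|\psi_1\rangle, |\psi_2\rangle, A|\psi_3\rangle].$$
Applying Definition \ref{Def:TerProdVec} to both sides reduces this to $\langle A^\dag\psi_2|\psi_3\rangle = \langle\psi_2|A\psi_3\rangle$, which is exactly the defining property of the Hermitian adjoint. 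Once this is observed, the minus sign in front of the middle term on the right-hand side of the claimed identity ensures that the second and third terms cancel, and the result follows.

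The main obstacle here is not analytic but merely cosmetic: the proposition is in fact an artificial rewriting of the trivial distributivity \eqref{Eqn:DisRuleOps} by adding and subtracting the same quantity, designed to match the structural form of the generalised Leibniz rule in Proposition \ref{Prop:GenJacLeiRule}. The conceptual content lies in recognising that the adjoint appearing in the middle slot is forced by the conjugate linearity of the ternary product in its second argument, as recorded in Proposition \ref{Prop:LinTerProdVec}. No boundedness or continuity hypothesis is actually used beyond ensuring $A^\dag$ is defined on all of $\mathcal{H}$, so I would not expend further effort on that point.
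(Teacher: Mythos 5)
Your proposal is correct and follows essentially the same route as the paper: both reduce the claim to the distributivity law \eqref{Eqn:DisRuleOps} and then cancel the second and third terms on the right-hand side via the identity $[|\psi_1\rangle, A^\dag|\psi_2\rangle, |\psi_3\rangle] = |\psi_1\rangle\langle\psi_2|A|\psi_3\rangle = [|\psi_1\rangle, |\psi_2\rangle, A|\psi_3\rangle]$, which is exactly the defining property of the adjoint. Your additional remark that the adjoint in the middle slot is forced by the conjugate linearity in the second argument is a fair gloss, not a deviation.
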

\begin{proof}
	In light of \eqref{Eqn:DisRuleOps},  we require that $-  [| \psi_1 \rangle  , A^\dag| \psi_2 \rangle, | \psi_3 \rangle] +   [| \psi_1 \rangle  ,| \psi_2 \rangle,A | \psi_3 \rangle] =0$. However, this is the case for any bounded operator $A$ as,  directly from Definition \ref{Def:ConTerProdOp}, $ [| \psi_1 \rangle  , A^\dag| \psi_2 \rangle, | \psi_3 \rangle] = | \psi_1 \rangle \langle \psi_2 | A |  \psi_3 \rangle = [| \psi_1 \rangle  ,| \psi_2 \rangle,A | \psi_3 \rangle]  $. 
\end{proof}
\begin{definition}
	Let $\mathcal{H}$ be a Hilbert space and let $[-,-,-]$ be its associated vector ternary product. A bounded linear operator $D \in \mathcal{B}(\mathcal{H})$ is said to be a \emph{derivation of the vector ternary product} on $\mathcal{H}$  if it satisfies the ternary Leibniz rule
	$$D[| \psi_1 \rangle  , | \psi_2 \rangle, | \psi_3 \rangle] = [D| \psi_1 \rangle  , | \psi_2 \rangle, | \psi_3 \rangle] + [| \psi_1 \rangle  , D | \psi_2 \rangle, | \psi_3 \rangle] + [| \psi_1 \rangle  , | \psi_2 \rangle,D | \psi_3 \rangle] \,,$$
	for all $| \psi_1 \rangle,| \psi_2 \rangle $ and $| \psi_3 \rangle \in \mathcal{H}$.
\end{definition}
There is a one-to-one correspondence between anti-self-adjoint and self-adjoint operators given by multiplication by $\rmi = \sqrt{-1}$.  Specifically, if $A$ is anti-self-adjoint, then $\rmi A$ is self-adjoint, i.e., $(\rmi A)^\dag =\rmi A $. Conversely,  if $B$ is self-adjoint, then $\rmi B$ is anti-self-adjoint, i.e., $(\rmi B)^\dag = - \rmi B $. The following proposition appears in \cite[Section 6]{Kerner:2008}.
\begin{proposition}
	There is a one-to-one correspondence between the set of derivations of the vector ternary product on $\mathcal{H}$ and the set of bounded observables $\mathcal{B}_s(\mathcal{H})$.
\end{proposition}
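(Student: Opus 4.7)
The plan is to characterise derivations of the vector ternary product as precisely the anti-self-adjoint bounded operators, and then use the standard bijection $A \leftrightarrow \rmi A$ between bounded self-adjoint operators and bounded anti-self-adjoint operators (already recalled in the paragraph immediately preceding the proposition) to obtain the claimed one-to-one correspondence with $\mathcal{B}_s(\mathcal{H})$.

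The key observation is that Proposition \ref{Prop:GenLeiRul} already gives us, for \emph{every} $D \in \mathcal{B}(\mathcal{H})$, the identity
$$D[| \psi_1 \rangle  , | \psi_2 \rangle, | \psi_3 \rangle] = [D| \psi_1 \rangle  , | \psi_2 \rangle, | \psi_3 \rangle] -  [| \psi_1 \rangle  , D^\dag| \psi_2 \rangle, | \psi_3 \rangle] +   [| \psi_1 \rangle  ,| \psi_2 \rangle,D | \psi_3 \rangle].$$
Suppose now that $D$ is a derivation. Subtracting the above equality from the ternary Leibniz rule in the definition of a derivation, and noting that the first and third terms cancel, we are left with
$$0 = [| \psi_1 \rangle , D| \psi_2 \rangle , | \psi_3 \rangle ] + [| \psi_1 \rangle , D^\dag | \psi_2 \rangle , | \psi_3 \rangle ],$$
for all $|\psi_1\rangle, |\psi_2\rangle, |\psi_3\rangle \in \mathcal{H}$. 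Using $[|\psi_1\rangle, X|\psi_2\rangle, |\psi_3\rangle] = |\psi_1\rangle \langle \psi_2 | X^\dag |\psi_3\rangle$, this rewrites as $|\psi_1\rangle \langle \psi_2 | (D+D^\dag)|\psi_3\rangle = 0$ for all triples of vectors, which forces $D + D^\dag = 0$, i.e., $D$ is anti-self-adjoint. Conversely, if $D^\dag = -D$, then substituting into Proposition \ref{Prop:GenLeiRul} immediately yields the ternary Leibniz rule, so $D$ is a derivation.

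Combining this characterisation with the bijection $\mathcal{B}_s(\mathcal{H}) \to \{D \in \mathcal{B}(\mathcal{H}) : D^\dag = -D\}$ given by $A \mapsto \rmi A$, with inverse $D \mapsto -\rmi D$, yields the desired one-to-one correspondence. Explicitly, to each bounded observable $A \in \mathcal{B}_s(\mathcal{H})$ one associates the derivation $\rmi A$, and every derivation arises this way.

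I do not anticipate a serious obstacle here: the whole argument is algebraic and rests on Proposition \ref{Prop:GenLeiRul}. The only mildly delicate point is keeping track of the conjugate-linearity of the second slot when passing from $[|\psi_1\rangle, D|\psi_2\rangle, |\psi_3\rangle]$ to $|\psi_1\rangle\langle \psi_2|D^\dag|\psi_3\rangle$, since it is precisely this conjugation that converts the ``minus $D^\dag$'' appearing in the general identity into the ``plus $D$'' appearing in the derivation identity, and hence forces anti-self-adjointness rather than self-adjointness of $D$ itself.
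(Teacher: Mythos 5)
Your proof is correct and follows essentially the same route as the paper: both arguments reduce the derivation condition, via the identity of Proposition \ref{Prop:GenLeiRul} (equivalently, the distributive law \eqref{Eqn:DisRuleOps} from which it is derived), to $\langle \psi_2 | (D + D^\dag) |\psi_3\rangle = 0$ for all vectors, conclude that $D$ is anti-self-adjoint, and then invoke the bijection $A \mapsto \rmi A$ with $\mathcal{B}_s(\mathcal{H})$. Your write-up is in fact slightly more complete than the paper's, since you explicitly verify the converse direction that every anti-self-adjoint bounded operator is a derivation.
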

\begin{proof}
	In light of \eqref{Eqn:DisRuleOps}, it is clear that $\langle \psi_2 | D^\dag |\psi_3 \rangle +\langle \psi_2 | D|\psi_3 \rangle=0 $ if a bounded linear operator is a derivation. Thus, as the vectors in $\mathcal{H}$ are arbitrary, $D^\dag  =  - D$. That is, $D$ must be anti-self-adjoint.  We can always find a unique self-adjoint operator $A \in \mathcal{B}(\mathcal{H})$ such that $D = \rmi A$. Conversely, any self-adjoint operator $A$ corresponds to an anti-self-adjoint operator $\rmi A =D$. 
\end{proof}
\begin{proposition}
	Derivations of the vector ternary product on a Hilbert space $\mathcal{H}$ are closed under the commutator.
\end{proposition}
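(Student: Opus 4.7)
The plan is to exploit the characterization established in the preceding proposition: a bounded linear operator $D \in \mathcal{B}(\mathcal{H})$ is a derivation of the vector ternary product if and only if $D$ is anti-self-adjoint, i.e., $D^\dag = -D$. Consequently, the statement reduces to showing that the set of anti-self-adjoint bounded operators is closed under the commutator $[D_1, D_2] := D_1 D_2 - D_2 D_1$.

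First, I would take two derivations $D_1, D_2 \in \mathcal{B}(\mathcal{H})$, so $D_1^\dag = -D_1$ and $D_2^\dag = -D_2$. Then a short computation using $(AB)^\dag = B^\dag A^\dag$ yields
\[
[D_1,D_2]^\dag = (D_1 D_2)^\dag - (D_2 D_1)^\dag = D_2^\dag D_1^\dag - D_1^\dag D_2^\dag = D_2 D_1 - D_1 D_2 = -[D_1,D_2],
\]
so $[D_1,D_2]$ is anti-self-adjoint and hence, by the preceding proposition, a derivation. Boundedness of the commutator is automatic since $\mathcal{B}(\mathcal{H})$ is an algebra under composition.

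As an alternative (and more self-contained) route, one could verify the ternary Leibniz rule for $[D_1, D_2]$ directly. Applying the Leibniz rule for $D_1$ to $D_2[|\psi_1\rangle,|\psi_2\rangle,|\psi_3\rangle]$ expanded via the Leibniz rule for $D_2$, and then subtracting the same expression with the roles of $D_1$ and $D_2$ swapped, the ``mixed'' terms (those containing both $D_1$ and $D_2$ acting on different slots) cancel pairwise, leaving precisely the three terms
\[
[[D_1,D_2]|\psi_1\rangle,|\psi_2\rangle,|\psi_3\rangle] + [|\psi_1\rangle,[D_1,D_2]|\psi_2\rangle,|\psi_3\rangle] + [|\psi_1\rangle,|\psi_2\rangle,[D_1,D_2]|\psi_3\rangle].
\]
The only step requiring a little care is keeping track of signs; since $D_1$ and $D_2$ each contribute the \emph{same} operator (not its adjoint) in every slot of the Leibniz rule, the cancellation is clean and no conjugation issue arises.

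I expect no real obstacle: the short adjoint computation is the natural approach given the prior characterization, and it is essentially the standard fact that the anti-self-adjoint operators form a real Lie subalgebra of $\mathcal{B}(\mathcal{H})$ under the commutator. The only subtlety worth flagging is that this is a \emph{real}-linear, not $\mathbb{C}$-linear, statement, mirroring the fact that derivations themselves form a real (not complex) vector space, in keeping with the conjugate-linearity in the middle slot of the ternary product.
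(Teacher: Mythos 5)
Your primary argument is exactly the paper's proof: both invoke the characterization of derivations as anti-self-adjoint operators and then check that $[D_1,D_2]^\dag = -[D_1,D_2]$. The proposal is correct and takes essentially the same approach, merely writing out the adjoint computation that the paper leaves implicit.
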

\begin{proof}
	If $D_1$ and $D_2$ are anti-self-adjoint operators, then $[D_1, D_2]^\dag = - [D_1, D_2]$, i.e., the commutator is also anti-self-adjoint.
\end{proof}
It is clear that the linear combination $a\, D_1 + b \, D_2$ is also anti-self-adjoint for $a$ and $b\in \R$. Note, rather obviously, this is not the case for linear combinations with complex coefficients with non-zero imaginary parts. We then have the following observation.
\begin{corollary}
	Derivations of the vector ternary product on a Hilbert space $\mathcal{H}$ form a real Lie algebra with respect to the commutator bracket.
\end{corollary}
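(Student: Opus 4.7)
The plan is to assemble the facts already established into the three ingredients needed for a real Lie algebra: a real vector space structure, a bilinear bracket, and the Lie axioms (antisymmetry and Jacobi). By the proposition preceding the corollary, the set $\mathcal{D}(\mathcal{H})$ of derivations of the vector ternary product is in bijection with $\mathcal{B}_s(\mathcal{H})$ via $D \leftrightarrow -\rmi D$, and the elements of $\mathcal{D}(\mathcal{H})$ are precisely the bounded anti-self-adjoint operators on $\mathcal{H}$.

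First I would verify that $\mathcal{D}(\mathcal{H})$ is a real vector subspace of $\big(\mathcal{B}(\mathcal{H}),+\big)$. This is exactly the content of the remark immediately preceding the corollary: for $D_1,D_2$ anti-self-adjoint and $a,b\in\R$, $(aD_1+bD_2)^\dag = \bar a D_1^\dag + \bar b D_2^\dag = -(aD_1+bD_2)$, while for coefficients with nonzero imaginary part the sign flips are destroyed. Hence $\mathcal{D}(\mathcal{H})$ is closed under real, but not complex, linear combinations — this is precisely why the eventual Lie algebra is only real.

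Next I would invoke the previous proposition, which establishes that $[D_1,D_2]\in\mathcal{D}(\mathcal{H})$ whenever $D_1,D_2\in\mathcal{D}(\mathcal{H})$, so the commutator restricts to a binary operation $[-,-]:\mathcal{D}(\mathcal{H})\times\mathcal{D}(\mathcal{H})\to \mathcal{D}(\mathcal{H})$. Bilinearity over $\R$, antisymmetry and the Jacobi identity
$$\big[D_1,[D_2,D_3]\big] = \big[[D_1,D_2],D_3\big] + \big[D_2,[D_1,D_3]\big]$$
hold in $\mathcal{B}(\mathcal{H})$ as a consequence of associative composition and therefore descend to the subspace $\mathcal{D}(\mathcal{H})$ automatically. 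Combining these three ingredients yields the claim.

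There is no real obstacle here; the work has essentially been done in the preceding propositions and remark. The only point that deserves emphasis is the word \emph{real}: the issue is purely the behaviour of the adjoint under scalar multiplication, and I would draw attention to it to explain why $\mathcal{D}(\mathcal{H})$ is not a complex Lie algebra, which is in turn why the bijection $D\mapsto -\rmi D$ with $\mathcal{B}_s(\mathcal{H})$ transports the Lie bracket on $\mathcal{D}(\mathcal{H})$ to a real Lie bracket on bounded observables (namely $-\rmi[-,-]$), consistent with the standard picture in quantum mechanics.
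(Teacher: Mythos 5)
Your proof is correct and follows exactly the route the paper intends: the corollary is drawn from the preceding proposition (closure of anti-self-adjoint operators under the commutator) together with the remark on real linear combinations, with antisymmetry and the Jacobi identity inherited from the associative composition in $\mathcal{B}(\mathcal{H})$. The paper leaves this as an immediate observation, so your slightly more explicit write-up adds nothing different in substance.
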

\section{Concluding remarks}
In this paper, we have re-examined the semiheaps and associated para-associative algebras that are naturally present in the theoretical setup of quantum mechanics. In particular, their symmetries and generalised derivations have been studied, this has, to the author's knowledge, not been explored before. \par 
Interestingly, semiheaps allow one to treat vectors and operators as non-distinct objects (see \eqref{Eqn:OpsVect}). The action of an operator on a state is replaced by a linear combination of triplets of states fed into the vector ternary product. As far as we know, this observation, first made by Kerner, has not been exploited in quantum mechanics.\par 
Note that Proposition \ref{Prop:GenJacLeiRule} and Proposition \ref{Prop:GenLeiRul} suggest that for a para-associative ternary product, the generalisation of Leibniz rule should be of the form $D[a,b,c] = [Da, b, c] - [a, D^\dag b , c] + [a,b, D c]$ for all elements $a,b$ and $c$. This is in contrast to the obvious direct generalisation of the Leibniz rule.  In particular, we note that there is a linear combination of objects of the form `$+\, - \, +$' and that this is a sign that a heap operation is at play here.  To the author's mind, this modification of the Leibniz rule deserves further investigation in the general setting of ternary algebras and not just those presented here. \par
In conclusion, quantum mechanics has provided much inspiration for the study of operator algebras and noncommutative structures. Similarly, quantum mechanics provides impetus for the investigation of ternary algebras and non-associative structures.

\section*{Acknowledgements}
The author thanks Steven Duplij for his encouragement to complete this work. A special thank you goes to Tomasz Brzezinski for introducing the author to heaps and related structures.

\end{document}